\title[Generalized Semimagic Squares for Digital Halftoning]
      {Generalized Semimagic Squares\\for Digital Halftoning}
\author{Akitoshi~\textsc{Kawamura}}
\let\mathcal\mathscr
\newtheorem{theorem}{Theorem}
\newtheorem{lemma}[theorem]{Lemma}
\newcommand{\Zset}{\mathbf Z}
\newcommand{\Rset}{\mathbf R}
\begin{document}

\begin{abstract}
  Completing Aronov et~al.'s study on 
  zero-discrepancy matrices for digital halftoning, 
  we determine all $(m, n, k, l)$ for which it is possible to 
  put $m n$ consecutive integers on an $m \times n$ board
  (with wrap-around) 
  so that each $k \times l$ region has the same sum. 
  For one of the cases where this is impossible, 
  we give a heuristic method to find a matrix with small discrepancy. 
\end{abstract}

\maketitle
\thispagestyle{empty}

  A \emph{semimagic square} is a square matrix
  whose entries are consecutive integers 
  and which has equal row and column sums. 
  One way to generalize this millennia-old concept is 
  to specify the sums on regions 
  other than rows and columns. 
  Ingenious constructions of 
  squares satisfying various sum constraints have been described by 
  many professional and amateur mathematicians. 
  While most of them are interested in adding more and more constraints 
  to make their squares impressive, 
  one can generally consider sum conditions on 
  any set of regions. 

  Aronov et al.~\cite{aronov08:_gener_of_magic_squar_with} 
  took up this problem for square regions: 
  is there an $n \times n$ matrix with entries $0$, \ldots, $n ^2 - 1$
  such that every $k \times k$ region has the same sum? 
  It is amusing to note that this variant of the classical problem 
  is motivated by an engineering question of
  finding good dither matrices for \emph{digital halftoning}, 
  a method to approximate a continuous-tone image by
  a binary image for printing (see their paper for details). 
  They showed~%
\cite[Theorem~1]{aronov08:_gener_of_magic_squar_with}, 
  using what they call \emph{constant-gap matrices}, 
  that the answer is yes if $k$ and $n$ are even 
  or if $n$ is an integer power of $k$, 
  and no if $k$ and $n$ are relatively prime or if $k$ is odd and $n$ is even. 
  We will solve this problem completely 
  by determining all $(n, k)$ for which such matrices exist
  (Section~\ref{section: uniform}). 
  Our construction of the matrices is much simpler even for the cases 
  that have already been settled positively. 
  We also give counterexamples to Asano et al.'s conjecture 
  on the smallest possible discrepancy when $n$ is odd and $k = 2$
  (Section~\ref{section: low discrepancy}). 

\subsection*{Definitions}
  For a positive integer~$N$, 
  we write $[N] = \{0, 1, \dots, N - 1\}$. 
  The remainder when an integer~$x$ is divided by $N$ 
  belongs to $[N]$ and 
  is denoted by $x \bmod N$. 

  We consider the slightly generalized setting where 
  the matrices and regions are rectangles instead of squares. 
  Let $m$ and $n$ be positive integers. 
  For an $m \times n$ matrix~$D$ and 
  index $(i, j) \in [m] \times [n]$, 
  we denote the $(i, j)$th entry of $D$ by $D (i, j)$. 
  Any set $R \subseteq [m] \times [n]$ of indices is called a \emph{region}. 
  The sum of the numbers on $R$ is denoted by $
 D (R) = \sum _{(i, j) \in R} D (i, j) 
  $. 
  The \emph{discrepancy} of $D$ with respect to 
  a set~$\mathcal R$ of regions is the difference
  between the maximum and minimum $D (R)$ as
  $R$ varies in $\mathcal R$. 
  When it is zero, $D$ is said to be
\emph{$\mathcal R$-uniform}.

  The translate of $R$ by $(a, b) \in \Zset ^2$ is denoted by
\begin{equation}
 R + (a, b) 
= 
  \bigl\{\, 
   \bigl( (i + a) \bmod m, (j + b) \bmod n \bigr) 
  :
   (i, j) \in R
  \,\bigr\}
\subseteq 
 [m] \times [n]. 
\end{equation}
  The set of all translates of $R$ is denoted by $
 \overline R = \bigl\{\, R + (a, b) : (a, b) \in \Zset ^2 \,\bigr\}
  $. 

\begin{figure}
 \begin{center}
\begin{equation*}\arraycolsep0pt\def\cell{\makebox[1.25em]}
  \begin{array}{|c|c|c|c|c|} \hline
   \cell{14} & \cell{1} & \cell{21} & \cell{0} & \cell{18} \\ \hline
   \cell{16} & \cell{13} & \cell{9} & \cell{22} & \cell{4} \\ \hline
   \cell{5} & \cell{17} & \cell{12} & \cell{7} & \cell{19} \\ \hline
   \cell{20} & \cell{2} & \cell{15} & \cell{11} & \cell{8} \\ \hline
   \cell{6} & \cell{24} & \cell{3} & \cell{23} & \cell{10} \\ \hline
  \end{array}
\end{equation*}
 \end{center}
 \caption{This $5 \times 5$ table~$D$
          has discrepancy $8$ with respect to $\overline{[2] \times [2]}$, 
          because $44 \leq D (R) \leq 52$ for every $2 \times 2$ region~$R$.}
 \label{figure: counterexample}
\end{figure}

  By an $m \times n$ \emph{table} we mean 
  an $m \times n$ matrix in which 
  each element of $[m n]$ 
  appears exactly once. 
  We are interested in 
  tables with small (or zero) discrepancy 
  with respect to $\overline{[k] \times [l]}$, 
  the set of all $k$-by-$l$ rectangles
  (Figure~\ref{figure: counterexample}). 

\section{Zero discrepancy}
 \label{section: uniform}

  The greatest common divisor of positive integers $x$ and $y$
  is denoted by $\gcd (x, y)$. 
  The goal of this section is to show the following: 

\begin{theorem} 
 \label{theorem: zero discrepancy}
  Let $m$, $n$, $k$, $l$ be positive integers with $k < m$ and $l < n$. 
  Let $k' = \gcd (k, m)$ and $l' = \gcd(l, n)$. 
  Then there exists a $\overline{[k] \times [l]}$-uniform 
  $m \times n$ table 
  if and only if 
  $k'$ and $l'$ are greater than $1$ and 
  $k' l' (m n - 1)$ is even. 
\end{theorem}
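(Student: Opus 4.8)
The plan is to reduce uniformity to two one‑dimensional periodicity conditions, dispatch necessity by elementary difference and counting arguments, and handle sufficiency by an explicit construction; the genuinely hard part will be the bijective construction in a ``mixed parity'' regime.

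\emph{Preliminaries.} Writing $S(i,j)=\sum_{a<k,\,b<l}D\bigl((i+a)\bmod m,(j+b)\bmod n\bigr)$, I would take first differences. Since $S(i+1,j)-S(i,j)=T(i+k,j)-T(i,j)$ with $T(i,j)=\sum_{b<l}D(i,(j+b)\bmod n)$, and $S(i,j+1)-S(i,j)=U(i,j+l)-U(i,j)$ with $U(i,j)=\sum_{a<k}D((i+a)\bmod m,j)$, uniformity is equivalent to: (A) $T$ is $k'$–periodic in $i$, and (B) $U$ is $l'$–periodic in $j$ (using that $k$ generates $k'\Zset_m$). Summing (A) over $j$ and (B) over $i$ then shows the row sums $R_i$ are $k'$–periodic and the column sums $C_j$ are $l'$–periodic.

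\emph{Necessity.} If $k'=1$, then (A) makes $T$ independent of $i$, so $T(i,j+1)-T(i,j)=D(i,(j+l)\bmod n)-D(i,j)$ is independent of $i$; summing over $i$ and using $kC_{j+l}=kC_j$ from (B) forces this difference to vanish, i.e.\ $D$ is $l$–periodic (hence $l'$–periodic) in $j$, which is impossible since $l'<n$ and the entries are distinct. The case $l'=1$ is symmetric, giving $k',l'>1$. For the parity, double counting the windows gives $\sigma=kl(mn-1)/2\in\Zset$; $l'$–periodicity of $C_j$ gives $\sum_{j<l'}C_j=l'm(mn-1)/2\in\Zset$; and $k'$–periodicity of $R_i$ gives $k'n(mn-1)/2\in\Zset$. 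A short case analysis on the parities of $m,n,k',l'$ (using that an odd number has only odd divisors) shows these three integrality statements together are equivalent to ``$k'l'(mn-1)$ even'', which is therefore necessary.

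\emph{Sufficiency.} First I would reduce to the divisor case $k'=k\mid m$, $l'=l\mid n$: a $\overline{[k']\times[l']}$–uniform table is automatically $\overline{[k]\times[l]}$–uniform, since $[k]\times[l]$ is a disjoint union of translates of $[k']\times[l']$ by the vectors $(ak',bl')$; in this case the condition becomes $kl(mn-1)$ even. It suffices to exhibit a \emph{bijective} table of the form $D(i,j)=c+\phi(i\bmod k,j)+\psi(i,j\bmod l)$ with $\sum_{r<k}\phi(r,j)=0$ and $\sum_{s<l}\psi(i,s)=0$, because these balance conditions force (A) and (B), so every such $D$ is uniform. Letting $\phi,\psi$ depend only on $\bigl(i\bmod k,\lfloor j/l\rfloor\bigr)$ and $\bigl(\lfloor i/k\rfloor,j\bmod l\bigr)$ reduces the task to filling a $k\times(n/l)$ array with equal column sums and an $(m/k)\times l$ array with equal row sums, then combining them in mixed radix; this succeeds whenever $k(kn/l-1)$ and $l(lm/k-1)$ are both even, which already covers the case $mn$ odd and the case $k,l$ both even.

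The main obstacle is the remaining regime—after transposing, $k$ even, $l$ odd, and $m/k$ even—where the equal–row–sum array does not exist even though $kl(mn-1)$ is even. Here the coarse block index $\lfloor i/k\rfloor$ cannot enter $D$ as a pure function of $i$ (such a term survives in $T$ and breaks (A)); it must be threaded through the residue $j\bmod l$ by a nonconstant zero–sum profile $\psi(i,\cdot)$. I would resolve this by a direct construction exploiting the even factor $k$ to build the required balanced profiles, verifying that the single nontrivial balancing is feasible exactly when $kl(mn-1)$ is even. Producing and checking this explicit bijection is the crux of the argument.
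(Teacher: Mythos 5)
Your necessity argument and both reductions are sound: the first-difference characterization of uniformity via your conditions (A) and (B) is correct, the three integrality statements (window sums, $l'$-periodic column sums, $k'$-periodic row sums) do pin down ``$k'l'(mn-1)$ even'' by the case analysis you sketch, and the tiling observation correctly reduces sufficiency to the divisor case $k\mid m$, $l\mid n$. But the sufficiency proof has a genuine hole exactly where you place the crux: in the regime ($k$ even, $l$ odd, $m/k$ even, up to transposition) you only promise ``a direct construction exploiting the even factor $k$,'' and no construction is given. This is not a deferred routine verification --- it is the entire difficulty of the theorem. Worse for your plan, your mixed-radix ansatz, with high digit a function of $(i\bmod k,\lfloor j/l\rfloor)$ and low digit a function of $(\lfloor i/k\rfloor,j\bmod l)$, provably cannot work there, and the two parity conditions it needs, $k(kn/l-1)$ even and $l(lm/k-1)$ even, are \emph{swapped} by transposition, so transposing the whole problem cannot rescue this ansatz either; escaping the obstruction requires leaving the ansatz, not just choosing the profiles cleverly within it.

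The paper closes this case by letting the high-order digit depend on the \emph{full} column index: $D(a,b)=P(a\bmod k,b)\,m+T(a,b\bmod l)$, where $P$ is a $\overline{[k]\times[l]}$-uniform $k\times n$ matrix each of whose rows is a permutation of $[n]$ (its Lemma~\ref{lemma: block Q}, built by stacking explicit base cases $(k,l)=(2,1),(3,1),(3,2)$), and $T(a,j)=Q(j,\lfloor a/k\rfloor)\,k+(a\bmod k)$ with $Q$ an $l\times(m/k)$ matrix with equal column sums whose rows are permutations of $[m/k]$ --- this $T$ is precisely your ``nonconstant zero-sum profile threading $\lfloor i/k\rfloor$ through $j\bmod l$,'' with the block index carried by $Q$ and the residue $a\bmod k$ as the lowest digit. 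The point is that the existence conditions now become $kl(n-1)$ even (for $P$) and $l(m/k-1)$ even (for $Q$), which are \emph{not} transpose-symmetric: after first transposing so that $l(mn-1)$ is even, both hold whenever $kl(mn-1)$ is even, and in your hard regime it is the $(3,2)$ base block that supplies the needed building matrix. Bijectivity then follows from a digit argument like yours. So your skeleton is compatible with the paper's proof, and your diagnosis of where the difficulty lies is accurate, but the step you leave open is exactly the paper's main construction; without it the proof is incomplete.
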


  This is an immediate consequence of the following 
  Lemmas \ref{lemma: gcd} and \ref{lemma: zero discrepancy}. 

\begin{lemma}
 \label{lemma: gcd}
  A $\overline{[k] \times [l]}$-uniform $m \times n$ matrix
  is $\overline{[\gcd(k, m)] \times [\gcd(l, n)]}$-uniform. 
\end{lemma}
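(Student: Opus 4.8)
The plan is to reduce the two dimensions one at a time: first shrink the column width from $l$ to $l' = \gcd(l,n)$, then invoke row/column symmetry to shrink the row height from $k$ to $k' = \gcd(k,m)$, and finally compose the two reductions. So the heart of the argument is the single claim that a $\overline{[k] \times [l]}$-uniform $m \times n$ matrix $D$ is automatically $\overline{[k] \times [l']}$-uniform.

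First I would record a sliding identity. For a row offset $a$, let $g_a(j) = \sum_{i=0}^{k-1} D\bigl((a+i) \bmod m, \, j\bigr)$ denote the sum of the length-$k$ vertical segment in column~$j$ starting at row~$a$, regarded as a function of $j$ on $\Zset / n \Zset$. The sum of the rectangle $[k] \times [l] + (a,b)$ is then $\sum_{j'=0}^{l-1} g_a(b + j')$. Subtracting this from the same expression for $[k] \times [l] + (a, b+1)$, all the shared columns $b+1, \dots, b+l-1$ cancel, and uniformity forces $g_a(b+l) = g_a(b)$ for every $a$ and every $b$. (Here $l < n$ guarantees that a one-column shift genuinely swaps column~$b$ for column~$b+l$.)

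The key step is a periodicity argument. The set of periods of $g_a$ on $\Zset / n \Zset$ is a subgroup of $\Zset$ containing both $l$ (just shown) and $n$, hence it contains $\gcd(l,n) = l'$; thus $g_a$ is $l'$-periodic. Consequently, if $h_a(b) = \sum_{j'=0}^{l'-1} g_a(b + j')$ denotes the sum of the $k \times l'$ rectangle at $(a,b)$, then a one-column slide changes it by $g_a(b + l') - g_a(b) = 0$, so $h_a(b)$ is independent of~$b$. Finally, since $l' \mid l$, each $k \times l$ rectangle splits into $q = l / l'$ side-by-side $k \times l'$ rectangles, so the common value $S$ of the $k \times l$ sums equals $\sum_{t=0}^{q-1} h_a(b + t l') = q\, h_a$. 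Hence $h_a = S/q$ is independent of $a$ as well, proving that every $k \times l'$ rectangle has sum $S/q$, i.e. $D$ is $\overline{[k] \times [l']}$-uniform.

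I expect the only delicate point to be the periodicity step: one must be sure that invariance of $g_a$ under a shift by $l$ really upgrades to invariance under a shift by $l' = \gcd(l,n)$, which is precisely the fact that $l$ and $l'$ generate the same subgroup of $\Zset / n \Zset$. Everything else is bookkeeping. With the one-sided reduction $\overline{[k] \times [l]} \to \overline{[k] \times [l']}$ in hand, applying the identical argument to the transpose of $D$ (swapping the roles of rows and columns, and of $m$ and $n$) yields $\overline{[k] \times [l']} \to \overline{[k'] \times [l']}$, and composing the two reductions gives the lemma.
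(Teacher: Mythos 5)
Your proof is correct and follows essentially the same route as the paper's: your one-column slide giving $g_a(b) = g_a(b+l)$ is the same complement/cancellation trick the paper uses to show that $[k'] \times [l] + (i,j)$ and $[k'] \times [l] + (i+k,j)$ have equal sums, your subgroup-of-periods step is exactly the paper's use of $k' = \gcd(k,m)$ to conclude that the translates at offsets $qk$ tile the strip, and the reduction of one dimension followed by transposition matches the paper's structure (you do columns first, the paper does rows first, which is immaterial). The only divergence is the endgame bookkeeping: the paper pins down the common value by tiling the whole strip $[m] \times [l]$ and expressing its sum as $\frac{1}{k} \sum_{r \in [m]} D\bigl([k] \times [l] + (r,j)\bigr)$, whereas you split a single $k \times l$ window into $l/l'$ equal-sum $k \times l'$ blocks, forcing $h_a = S l'/l$ directly --- a marginally more streamlined normalization that avoids the averaging identity.
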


\begin{proof}
  Let $D$ be a $\overline{[k] \times [l]}$-uniform $m \times n$ matrix. 
  We will show that $D$ is $\overline{[k'] \times [l]}$-uniform, 
  where $k' = \gcd (k, m)$. 
  We get the conclusion of the lemma 
  by repeating the same argument with rows and columns switched. 

  For each $(i, j) \in [m] \times [n]$, 
  the regions $[k'] \times [l] + (i, j)$ and $[k'] \times [l] + (i + k, j)$
  have the same sum on $D$, 
  because each of them combined with 
  $[k - k'] \times [l] + (i + k', j)$ makes a $k \times l$ rectangle. 
  Thus for each $(i, j) \in [m] \times [n]$, the rectangles 
\begin{equation}
 \label{equation: regions}
 [k'] \times [l] + (i + q k, j), \qquad q \in [m / k'], 
\end{equation}
  all have the same sum on $D$. 
  Since $k' = \gcd (k, m)$, 
  these $m / k'$ rectangles 
  cover the strip $[m] \times [l] + (0, j)$ without overlap. 
  Hence, 
\begin{align}
  \frac{m}{k'} \cdot D \bigl( [k'] \times [l] + (i, j) \bigr)
&
 =
  \sum _{q \in [m / k']} D \bigl( [k'] \times [l] + (i + q k, j) \bigr)
\\
\notag
&
 =
  D ([m] \times [l] + (0, j))
=
  \frac 1 k \sum _{r \in [m]} D ([k] \times [l] + (r, j)). 
\end{align}
  Since the rightmost side is a constant independent of $(i, j)$ by 
  $\overline{[k] \times [l]}$-uniformity, 
  so is the leftmost side. 
  Thus $D$ is $\overline{[k'] \times [l]}$-uniform. 
\end{proof}

\begin{lemma}
 \label{lemma: zero discrepancy}
  Let $m$ and $n$ be positive integers, and 
  let $k < m$ and $l < n$ be their positive divisors, respectively. 
  Then there exists a $\overline{[k] \times [l]}$-uniform 
  $m \times n$ table
  if and only if 
  $k$ and $l$ are greater than $1$ and 
  $k l (m n - 1)$ is even. 
\end{lemma}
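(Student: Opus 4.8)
The plan is to prove the two directions separately, with necessity being short and the construction being the crux.

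For necessity, suppose a $\overline{[k] \times [l]}$-uniform $m \times n$ table $D$ exists. To see that $k, l > 1$, note that if $k = 1$ then comparing the rectangles $[1] \times [l] + (i, j)$ and $[1] \times [l] + (i, j + 1)$, which overlap in $l - 1$ cells, forces $D(i, j) = D(i, (j + l) \bmod n)$ for every $(i, j)$; since $l \mid n$ and $l < n$ these are distinct cells, contradicting that each value of $[m n]$ occurs once. The case $l = 1$ is symmetric. For the parity condition, I use $k \mid m$ and $l \mid n$: the $m n / (k l)$ axis-aligned translates $[k] \times [l] + (k a, l b)$ tile $[m] \times [n]$, so their common sum $S$ satisfies $(m n / (k l)) \, S = \sum_{x \in [m n]} x = \binom{m n}{2}$, whence $S = k l (m n - 1) / 2$ must be an integer and $k l (m n - 1)$ is even.

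For sufficiency I will exhibit a table directly, using a template that makes uniformity automatic. Call $P \colon [m] \times [n] \to \Rset$ a \emph{row gadget} if $P(i, j) = P((i + k) \bmod m, j)$ and $\sum_{i \in [k]} P(i, j) = 0$ for every $j$, and $Q$ a \emph{column gadget} if $Q(i, j) = Q(i, (j + l) \bmod n)$ and $\sum_{j \in [l]} Q(i, j) = 0$ for every $i$. For any row gadget $P$, column gadget $Q$, and constant $c$, the matrix $D = c + P + Q$ is $\overline{[k] \times [l]}$-uniform: summing over a window $[k] \times [l] + (a, b)$, the $P$-part contributes $\sum_{j = b}^{b + l - 1} \bigl( \sum_{i = a}^{a + k - 1} P(i, j) \bigr) = 0$ because $k$ consecutive rows form a full period of $P$, and likewise the $Q$-part vanishes, leaving the constant $k l c$. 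It therefore remains only to set $c = (m n - 1) / 2$ and choose gadgets $P, Q$ so that $D$ is a bijection onto $[m n]$, equivalently so that $(i, j) \mapsto P(i, j) + Q(i, j)$ is a bijection onto the centered set $T = \{ -(m n - 1)/2, \dots, (m n - 1)/2 \}$.

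The engine for the bijection is a mixed-radix tiling. Writing $p = m / k$, $q = n / l$ and coordinates $i = \alpha + k \beta$, $j = \gamma + l \delta$ with $\alpha \in [k]$, $\beta \in [p]$, $\gamma \in [l]$, $\delta \in [q]$, a row gadget depends on $i$ only through $\alpha$ and a column gadget on $j$ only through $\gamma$. I aim to split $T$ as a direct sum $\mathcal P \oplus \mathcal Q$, with a set $\mathcal P$ of size $k q$ realized by $P$ (through $\alpha, \delta$) and a set $\mathcal Q$ of size $p l$ realized by $Q$ (through $\beta, \gamma$): take $\mathcal Q$ to be a centered block of $p l$ consecutive (half-)integers and $\mathcal P$ the corresponding $p l$-spaced centered block of $k q$ values, so that $\mathcal P \oplus \mathcal Q = T$ by the base-$p l$ expansion. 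The gadget constraints then reduce to one combinatorial fact: a centered arithmetic progression of length $k q$ can be partitioned into $q$ blocks of size $k$ each summing to zero (and symmetrically, length $p l$ into $p$ blocks of size $l$). This is exactly where the parity hypothesis enters, since a $k$-element set of half-integers can sum to zero only when $k$ is even; when a factor is even I pair symmetric elements, and in the all-odd case (which forces $m n$ odd, so $T \subseteq \Zset$) I build zero-sum blocks around the central element.

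The main obstacle is making the bijection work uniformly across all admissible $(m, n, k, l)$, not merely the generic ones. The clean product split $\mathcal P \oplus \mathcal Q$ succeeds only when both one-dimensional partitions exist, and there are admissible parameters (for instance $k$ even while $l$ is odd and $p l$ is even) where one side fails and the gadgets must genuinely couple the two directions rather than act independently. Resolving these cases—choosing zero-sum gadgets that avoid the degeneracy caused by a vanishing balanced value while still tiling $T$—is the heart of the argument, and I expect it, together with the bookkeeping separating the even and all-odd parity regimes, to be the step demanding the most care.
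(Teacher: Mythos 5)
Your necessity argument is complete and correct, and essentially coincides with the paper's: the same $k = 1$ comparison of overlapping windows, and integrality of the common window sum $k l (m n - 1)/2$ (the paper gets this sum by averaging over all translates; your exact tiling, which uses $k \mid m$ and $l \mid n$, is equally valid). Your sufficiency framework is also sound as far as it goes, and is in fact the paper's construction up to recentering by $(m n - 1)/2$: the paper builds $D (a, b) = P (a \bmod k, b)\, m + T (a, b \bmod l)$, a mixed-radix sum of a row-periodic uniform matrix and a column-periodic uniform matrix, which is exactly your ``constant plus row gadget plus column gadget'' template together with your direct-sum bijection argument.

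The genuine gap is the one you flag yourself and then leave unresolved: the coupled case. Your product ansatz, with $P$ depending only on $(\alpha, \delta)$ and $Q$ only on $(\beta, \gamma)$, provably cannot work for admissible parameters such as $(m, n, k, l) = (4, 9, 2, 3)$: there $\mathcal Q$ would be a centered run of $p l = 6$ half-integers, e.g.\ $\{\pm\tfrac12, \pm\tfrac32, \pm\tfrac52\}$, to be cut into $p = 2$ zero-sum triples, which is impossible since any three half-integers have half-integral sum, and neither swapping the radix roles nor transposing the board removes this parity obstruction within the uncoupled scheme. Saying ``the gadgets must genuinely couple the two directions'' names the difficulty but supplies no construction, and that construction is the entire content of the hard case; your all-odd case (``zero-sum blocks around the central element'') is likewise asserted rather than built. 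The paper closes both holes with one lemma: after transposing so that WLOG $l (m n - 1)$ is even, it takes for the row-direction factor a $\overline{[k] \times [l]}$-uniform $k \times n$ matrix whose rows are permutations of $[n]$ --- a row gadget allowed to depend on the \emph{full} column index $b = \gamma + l \delta$, which is precisely the coupling you need --- and obtains such matrices for all admissible $(k, l)$ by vertically stacking three explicit seeds $(k, l) = (2, 1)$, $(3, 1)$, $(3, 2)$; the $(3, 2)$ seed (two identical rows atop a reversed row) is the coupled gadget missing from your plan, and the seed $P (2, j) = (-2 j - 1) \bmod n$ for $(3, 1)$ is the explicit odd-case zero-sum construction you postponed. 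So your skeleton is recoverable, but as written the sufficiency direction stops exactly at its declared crux and is not yet a proof.
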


  One direction is a simple
  generalization of
  \cite[Theorem~1 (b, c)]{aronov08:_gener_of_magic_squar_with}: 

\begin{proof}[Proof of the ``only if'' part of Lemma~\ref{lemma: zero discrepancy}]
  Let $D$ be a $\overline{[k] \times [l]}$-uniform 
  $m \times n$ table. 
  It is easy to see that $
D (R) = k l (m n - 1) / 2 
  $ for each $R \in \overline{[k] \times [l]}$. 
  Since $D (R)$ must be an integer, 
  the second claim follows. 
  For the first claim, 
  assume $k = 1$ for contradiction (the case $l = 1$ is similar). 
  Then $D ([1] \times [l]) = D ([1] \times [l] + (0, 1))$ 
  and hence $
 D (0, 0) = D (0, l)
  $, contradicting the assumption that $D$ is a table.
\end{proof}

  For the converse, 
  we use the building blocks 
  provided by the following lemma: 

\begin{figure}
\begin{center}\arraycolsep0pt\def\cell{\makebox[1.16em]}
\begin{equation*}
 \begin{gathered}
  \begin{array}{|c|c|c|c|c|c|c|} \hline
   \cell{0} &
   \cell{1} &
   \cell{2} &
   \cell{3} &
   \cell{4} &
   \cell{5} &
   \cell{6} \\ \hline
   \cell{6} &
   \cell{5} &
   \cell{4} &
   \cell{3} &
   \cell{2} &
   \cell{1} &
   \cell{0} \\ \hline
  \end{array}
  \\
  (k, l, n) = (2, 1, 7)
 \end{gathered}
 \qquad\qquad
 \begin{gathered}
  \begin{array}{|c|c|c|c|c|c|c|} \hline
   \cell{0} &
   \cell{1} &
   \cell{2} &
   \cell{3} &
   \cell{4} &
   \cell{5} &
   \cell{6} \\ \hline
   \cell{3} &
   \cell{4} &
   \cell{5} &
   \cell{6} &
   \cell{0} &
   \cell{1} &
   \cell{2} \\ \hline
   \cell{6} &
   \cell{4} &
   \cell{2} &
   \cell{0} &
   \cell{5} &
   \cell{3} &
   \cell{1} \\ \hline
  \end{array}
  \\
  (k, l, n) = (3, 1, 7)
 \end{gathered}
 \qquad\qquad
 \begin{gathered}
  \begin{array}{|c|c|c|c|c|c|c|c|} \hline
   \cell{0} &
   \cell{4} &
   \cell{1} &
   \cell{5} &
   \cell{2} &
   \cell{6} &
   \cell{3} &
   \cell{7} \\ \hline
   \cell{0} &
   \cell{4} &
   \cell{1} &
   \cell{5} &
   \cell{2} &
   \cell{6} &
   \cell{3} &
   \cell{7} \\ \hline
   \cell{7} &
   \cell{6} &
   \cell{5} &
   \cell{4} &
   \cell{3} &
   \cell{2} &
   \cell{1} &
   \cell{0} \\ \hline
  \end{array}
  \\
  (k, l, n) = (3, 2, 8)
 \end{gathered}
\end{equation*}
 \caption{Examples of matrices of Lemma~\ref{lemma: block Q}.}
 \label{figure: construction of Q}
\end{center}
\end{figure}

\begin{lemma}
 \label{lemma: block Q}
  Let $k > 1$ and $l > 0$ be integers and 
  let $n$ be a positive multiple of $l$. 
  If $k l (n - 1)$ is even, then
  there exists a $
\overline{[k] \times [l]}
  $-uniform $k \times n$ matrix 
  in which each row is a permutation of $[n]$. 
\end{lemma}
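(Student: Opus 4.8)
The plan is to first reduce $\overline{[k]\times[l]}$-uniformity of a $k\times n$ matrix to a one-dimensional condition on its column sums. Since $[k]$ already exhausts all $k$ rows, every region $[k]\times[l]+(a,b)$ equals $[k]\times(\{b,\dots,b+l-1\}\bmod n)$, so its $D$-sum is $\sum_{t=0}^{l-1} C_{(b+t)\bmod n}$, where $C_j=\sum_{i\in[k]}D(i,j)$ denotes the $j$th column sum. Comparing consecutive windows, uniformity is equivalent to $C_j=C_{(j+l)\bmod n}$ for all $j$; and since $l\mid n$, this says precisely that $C_j$ depends only on $j\bmod l$. Thus it suffices to produce $k$ permutations of $[n]$ (the rows) whose columnwise sums are constant on residue classes modulo $l$.

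Next I would strip off the parity of $k$ using complementary pairs. If $\sigma$ is any permutation of $[n]$, then $j\mapsto n-1-\sigma(j)$ is also a permutation, and the two together contribute the constant $n-1$ to every column sum; stacking such pairs therefore only adds a constant to each $C_j$ and never disturbs periodicity. Hence for even $k$ I simply take $k/2$ complementary pairs (say the identity $j$ and its reverse $n-1-j$), whose column sums are all equal to $(k/2)(n-1)$; note $kl(n-1)$ is automatically even here. For odd $k\ge 3$ I write $k=3+2s$ and reduce to building a single $3\times n$ block with the required periodicity, on top of which I stack $s$ complementary pairs. Since $k$ is odd, the hypothesis $kl(n-1)$ even is equivalent to $l(n-1)$ even, the same condition this $3\times n$ block must meet.

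It remains to construct the $3\times n$ block, and I would split on the parity of $n$. If $n$ is odd, I aim for constant column sums $3(n-1)/2$ (period $1$, hence periodic modulo any $l$): take Row~$0$ to be $j$, Row~$1$ to be $(j+(n-1)/2)\bmod n$, and \emph{define} Row~$2$ by $D(2,j)=3(n-1)/2-D(0,j)-D(1,j)$, so that column sums are constant by fiat. The actual content is to check that Row~$2$ is a genuine permutation of $[n]$ lying in range. Splitting on whether $j+(n-1)/2<n$, one finds Row~$2$ runs through the even numbers $n-1,n-3,\dots,0$ for small $j$ and through the odd numbers $n-2,n-4,\dots,1$ for large $j$, covering $[n]$ exactly once. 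If $n$ is even, then by the reduction above $l$ is even, so it suffices to make the column sums periodic with period $2$: a width-$l$ window then contains equally many even and odd columns and so has constant sum. Generalizing the third figure, I would take Row~$0=$ Row~$1$ equal to the perfect shuffle $P$ with $P(2s)=s$ and $P(2s+1)=n/2+s$, and Row~$2$ the reverse $j\mapsto n-1-j$; a short computation then gives column sum $n-1$ on even columns and $2(n-1)$ on odd columns.

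The main obstacle is exactly the explicit $3\times n$ construction for odd $n$: because each entry is reduced modulo $n$, the fact that the three rows are permutations does not by itself force the true (unreduced) column sums to be constant, so the crux is the range-and-permutation verification for Row~$2$ sketched above. Everything else — the reduction to column sums, the complementary-pair bookkeeping, and the even-$n$ shuffle — is then routine once this block is in hand.
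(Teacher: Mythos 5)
Your proposal is correct and follows essentially the same route as the paper: the same reduction by stacking to the three building blocks $(k,l)=(2,1)$, $(3,1)$ for odd $n$, and $(3,2)$ for even $n$, with your explicit rows coinciding with the paper's (your odd-$n$ third row is exactly $(-2j-1)\bmod n$, and your even-$n$ shuffle block is the paper's third example). Your column-sum-periodicity criterion is just a transparent packaging of the paper's stacking observation and its ``easy to verify'' uniformity checks, and your verification of it is sound.
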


\begin{proof}
  A $\overline{[k] \times [l]}$-uniform $k \times n$ matrix and 
  a $\overline{[k'] \times [l]}$-uniform $k' \times n$ matrix 
  stacked vertically make 
  a $\overline{[k + k'] \times [l]}$-uniform $(k + k') \times n$ matrix. 
  Also, a $\overline{[k] \times [l]}$-uniform matrix is 
  $\overline{[k] \times [l']}$-uniform for any multiple~$l'$ of $l$. 
  Therefore, it suffices to construct the desired matrix~$P$ for the cases
  $(k, l) = (2, 1)$, $(3, 1)$ and $(3, 2)$ 
  (Figure~\ref{figure: construction of Q}). 
  If $(k, l) = (2, 1)$, 
  let 
\begin{align}
  P (0, j) 
&
 =
  j, 
&
  P (1, j) 
&
 =
  n - 1 - j. 
\end{align}
  If $(k, l) = (3, 1)$, then $n$ is odd by the assumption; let 
\begin{align}
   P (0, j) 
 &
  = 
   j, 
 &
   P (1, j) 
 &
  = 
   \biggl( j + \frac{n - 1}{2} \biggr) \bmod n, 
 &
   P (2, j) 
 &
  = 
   (-2 j - 1) \bmod n. 
\end{align}
  If $(k, l) = (3, 2)$, let
\begin{align}
  P (0, j) 
 =
  P (1, j) 
&
 =
  \Bigl\lfloor \frac j 2 \Bigr\rfloor + \frac n 2 (j \bmod 2), 
&
 P (2, j) 
&
 =
   n - 1 - j. 
\end{align}
  It is easy to verify that $P$ is $\overline{[k] \times [l]}$-uniform
  in each case. 
\end{proof}

\begin{proof}[Proof of the ``if'' part of Lemma~\ref{lemma: zero discrepancy}]
  We may assume without loss of generality that 
  $l (m n - 1)$ is even. 
  In this case, both $k l (n - 1)$ and $l (m / k - 1)$ are even, 
  so by Lemma~\ref{lemma: block Q}, there are
  a $\overline{[k] \times [l]}$-uniform $k \times n$ matrix $P$
  whose rows are permutations of $[n]$, 
  and 
  a $\overline{[l] \times [1]}$-uniform $l \times (m / k)$ matrix $Q$ 
  whose rows are permutations of $[m / k]$. 
  Define an $m \times l$ matrix~$T$ by 
\begin{equation}
 \label{eq: definition of T}
 T (a, j) = Q (j, \lfloor a / k \rfloor) k + (a \bmod k). 
\end{equation}
  Then $T$ is $\overline{[k] \times [l]}$-uniform and 
  its columns are permutations of $[m]$. 
  Define an $m \times n$ matrix $D$ by
\begin{equation}
 \label{eq: 0608221803}
   D (a, b) 
  =
   P (a \bmod k, b) m + T (a, b \bmod l) 
\end{equation}
  (Figure~\ref{figure: D is P plus T}).
  Since $P$ and $T$ are 
  $\overline{[k] \times [l]}$-uniform, 
  so is $D$. 
  To see that $D$ is a table, 
  suppose that $
   D (a, b) 
  =
   D (a', b')
  $. 
  By \eqref{eq: definition of T} and \eqref{eq: 0608221803} we see that
\begin{equation}
\left\{
\begin{aligned}
 P (a \bmod k, b) & = P (a' \bmod k, b'), \\
 Q (b \bmod l, \lfloor a / k \rfloor) & = Q (b' \bmod l, \lfloor a' / k \rfloor), \\
 a \bmod k & = a' \bmod k. 
\end{aligned}
\right.
\end{equation}
  Since $P$'s rows are permutations, 
  the first and the third equation imply that $b = b'$. 
  Since $Q$'s rows are permutations, 
  this and the second equation imply that $a = a'$. 
\begin{figure}
\begin{center}
\begin{align*}
  D 
&
 = 
   \left[
    \begin{array}{c}
     P \\
     P \\
     P
    \end{array}
   \right]
    \times 9
  + 
   \left[
    \begin{array}{cccc}
     T &
     T & 
     T & 
     T
    \end{array}
   \right]
\\
&
\arraycolsep0pt\def\cell#1#2{\dimen0=1.2em\advance\dimen0\arrayrulewidth\multiply\dimen0 by#1\advance\dimen0-\arrayrulewidth\makebox[\dimen0]{#2}}\footnotesize
 =
  \begin{array}{|c|c|c|c|c|c|c|c|} \hline
   \cell{1}{0} & 
   \cell{1}{4} & 
   \cell{1}{1} & 
   \cell{1}{5} & 
   \cell{1}{2} & 
   \cell{1}{6} & 
   \cell{1}{3} & 
   \cell{1}{7} \\ \hline
   \cell{1}{0} & 
   \cell{1}{4} & 
   \cell{1}{1} & 
   \cell{1}{5} & 
   \cell{1}{2} & 
   \cell{1}{6} & 
   \cell{1}{3} & 
   \cell{1}{7} \\ \hline
   \cell{1}{7} & 
   \cell{1}{6} & 
   \cell{1}{5} & 
   \cell{1}{4} & 
   \cell{1}{3} & 
   \cell{1}{2} & 
   \cell{1}{1} & 
   \cell{1}{0} \\ \hline
   \cell{1}{0} & 
   \cell{1}{4} & 
   \cell{1}{1} & 
   \cell{1}{5} & 
   \cell{1}{2} & 
   \cell{1}{6} & 
   \cell{1}{3} & 
   \cell{1}{7} \\ \hline
   \cell{1}{0} & 
   \cell{1}{4} & 
   \cell{1}{1} & 
   \cell{1}{5} & 
   \cell{1}{2} & 
   \cell{1}{6} & 
   \cell{1}{3} & 
   \cell{1}{7} \\ \hline
   \cell{1}{7} & 
   \cell{1}{6} & 
   \cell{1}{5} & 
   \cell{1}{4} & 
   \cell{1}{3} & 
   \cell{1}{2} & 
   \cell{1}{1} & 
   \cell{1}{0} \\ \hline
   \cell{1}{0} & 
   \cell{1}{4} & 
   \cell{1}{1} & 
   \cell{1}{5} & 
   \cell{1}{2} & 
   \cell{1}{6} & 
   \cell{1}{3} & 
   \cell{1}{7} \\ \hline
   \cell{1}{0} & 
   \cell{1}{4} & 
   \cell{1}{1} & 
   \cell{1}{5} & 
   \cell{1}{2} & 
   \cell{1}{6} & 
   \cell{1}{3} & 
   \cell{1}{7} \\ \hline
   \cell{1}{7} & 
   \cell{1}{6} & 
   \cell{1}{5} & 
   \cell{1}{4} & 
   \cell{1}{3} & 
   \cell{1}{2} & 
   \cell{1}{1} & 
   \cell{1}{0} \\ \hline
  \end{array}
 \times 9 +
  \begin{array}{|c|c|c|c|c|c|c|c|} \hline
   \cell{1}{0} & 
   \cell{1}{6} & 
   \cell{1}{0} & 
   \cell{1}{6} & 
   \cell{1}{0} & 
   \cell{1}{6} & 
   \cell{1}{0} & 
   \cell{1}{6} \\ \hline
   \cell{1}{1} & 
   \cell{1}{7} & 
   \cell{1}{1} & 
   \cell{1}{7} & 
   \cell{1}{1} & 
   \cell{1}{7} & 
   \cell{1}{1} & 
   \cell{1}{7} \\ \hline
   \cell{1}{2} & 
   \cell{1}{8} & 
   \cell{1}{2} & 
   \cell{1}{8} & 
   \cell{1}{2} & 
   \cell{1}{8} & 
   \cell{1}{2} & 
   \cell{1}{8} \\ \hline
   \cell{1}{3} & 
   \cell{1}{3} & 
   \cell{1}{3} & 
   \cell{1}{3} & 
   \cell{1}{3} & 
   \cell{1}{3} & 
   \cell{1}{3} & 
   \cell{1}{3} \\ \hline
   \cell{1}{4} & 
   \cell{1}{4} & 
   \cell{1}{4} & 
   \cell{1}{4} & 
   \cell{1}{4} & 
   \cell{1}{4} & 
   \cell{1}{4} & 
   \cell{1}{4} \\ \hline
   \cell{1}{5} & 
   \cell{1}{5} & 
   \cell{1}{5} & 
   \cell{1}{5} & 
   \cell{1}{5} & 
   \cell{1}{5} & 
   \cell{1}{5} & 
   \cell{1}{5} \\ \hline
   \cell{1}{6} & 
   \cell{1}{0} & 
   \cell{1}{6} & 
   \cell{1}{0} & 
   \cell{1}{6} & 
   \cell{1}{0} & 
   \cell{1}{6} & 
   \cell{1}{0} \\ \hline
   \cell{1}{7} & 
   \cell{1}{1} & 
   \cell{1}{7} & 
   \cell{1}{1} & 
   \cell{1}{7} & 
   \cell{1}{1} & 
   \cell{1}{7} & 
   \cell{1}{1} \\ \hline
   \cell{1}{8} & 
   \cell{1}{2} & 
   \cell{1}{8} & 
   \cell{1}{2} & 
   \cell{1}{8} & 
   \cell{1}{2} & 
   \cell{1}{8} & 
   \cell{1}{2} \\ \hline
  \end{array}
 =
  \begin{array}{|c|c|c|c|c|c|c|c|} \hline
   \cell{1}{0} & 
   \cell{1}{42} & 
   \cell{1}{9} & 
   \cell{1}{51} & 
   \cell{1}{18} & 
   \cell{1}{60} & 
   \cell{1}{27} & 
   \cell{1}{69} \\ \hline
   \cell{1}{1} & 
   \cell{1}{43} & 
   \cell{1}{10} & 
   \cell{1}{52} & 
   \cell{1}{19} & 
   \cell{1}{61} & 
   \cell{1}{28} & 
   \cell{1}{70} \\ \hline
   \cell{1}{65} & 
   \cell{1}{62} & 
   \cell{1}{47} & 
   \cell{1}{44} & 
   \cell{1}{29} & 
   \cell{1}{26} & 
   \cell{1}{11} & 
   \cell{1}{8} \\ \hline
   \cell{1}{3} & 
   \cell{1}{39} & 
   \cell{1}{12} & 
   \cell{1}{48} & 
   \cell{1}{21} & 
   \cell{1}{57} & 
   \cell{1}{30} & 
   \cell{1}{66} \\ \hline
   \cell{1}{4} & 
   \cell{1}{40} & 
   \cell{1}{13} & 
   \cell{1}{49} & 
   \cell{1}{22} & 
   \cell{1}{58} & 
   \cell{1}{31} & 
   \cell{1}{67} \\ \hline
   \cell{1}{68} & 
   \cell{1}{59} & 
   \cell{1}{50} & 
   \cell{1}{41} & 
   \cell{1}{32} & 
   \cell{1}{23} & 
   \cell{1}{14} & 
   \cell{1}{5} \\ \hline
   \cell{1}{6} & 
   \cell{1}{36} & 
   \cell{1}{15} & 
   \cell{1}{45} & 
   \cell{1}{24} & 
   \cell{1}{54} & 
   \cell{1}{33} & 
   \cell{1}{63} \\ \hline
   \cell{1}{7} & 
   \cell{1}{37} & 
   \cell{1}{16} & 
   \cell{1}{46} & 
   \cell{1}{25} & 
   \cell{1}{55} & 
   \cell{1}{34} & 
   \cell{1}{64} \\ \hline
   \cell{1}{71} & 
   \cell{1}{56} & 
   \cell{1}{53} & 
   \cell{1}{38} & 
   \cell{1}{35} & 
   \cell{1}{20} & 
   \cell{1}{17} & 
   \cell{1}{2} \\ \hline
  \end{array}
\end{align*}
\end{center}
 \caption{Construction of $D$ 
          for $(k, l, m, n) = (3, 2, 9, 8)$.}
 \label{figure: D is P plus T}
\end{figure}
\end{proof}

  In the above, we constructed the uniform table as a
  linear combination of two uniform matrices with smaller entries. 
  This idea is due to Euler~%
\cite{euler} 
  who gave a construction of a semimagic square
  (that is, a $(\overline{[1] \times [n]} \cup \overline{[n] \times [1]})$-uniform
  $n \times n$ table) from a pair of special
  $(\overline{[1] \times [n]} \cup \overline{[n] \times [1]})$-uniform matrices 
  called \emph{Latin squares}. 

\section{Finding low-discrepancy tables by ranking}
 \label{section: low discrepancy}

  In this section, we confine ourselves, 
  as Asano et al.~\cite{asano05:_distr_distin_integ_unifor_squar} did, 
  to the case where $k = l = 2$ and $m = n$. 
  Theorem~\ref{theorem: zero discrepancy} states that in this case
  a uniform table exists if and only if $n$ is even. 
  For odd $n$'s, 
  they construct a table with discrepancy $2 n$, 
  and conjecture that it is the smallest possible. 
  This is refuted by our Figures \ref{figure: counterexample}
  and \ref{figure: thirty-one}. 
  Figure~\ref{figure: counterexample} was discovered by an exhaustive search. 
  We describe brief\textcompwordmark ly 
  how Figure~\ref{figure: thirty-one} was obtained. 

\begin{figure}
\begin{center}
\begin{tiny}\tabcolsep1.2pt
\mbox{}\kern-40pt%
\begin{tabular}{rrrrrrrrrrrrrrrrrrrrrrrrrrrrrrr}
433 & 523 & 439 & 519 & 445 & 511 & 453 & 507 & 460 & 497 & 465 & 490 & 472 & 486 & 478 & 480 & 482 & 476 & 487 & 470 & 492 & 462 & 500 & 458 & 508 & 450 & 515 & 442 & 520 & 437 & 525 \\
616 & 348 & 603 & 362 & 592 & 378 & 576 & 393 & 556 & 411 & 543 & 427 & 524 & 449 & 494 & 479 & 471 & 505 & 441 & 529 & 422 & 547 & 403 & 564 & 387 & 581 & 374 & 594 & 357 & 607 & 345 \\
283 & 675 & 298 & 655 & 312 & 634 & 337 & 614 & 356 & 590 & 383 & 560 & 416 & 531 & 451 & 483 & 498 & 436 & 538 & 405 & 570 & 377 & 595 & 350 & 622 & 327 & 643 & 306 & 662 & 292 & 678 \\
727 & 241 & 716 & 253 & 692 & 278 & 669 & 305 & 637 & 340 & 604 & 375 & 569 & 418 & 522 & 475 & 454 & 535 & 401 & 580 & 361 & 615 & 329 & 646 & 300 & 677 & 273 & 701 & 249 & 721 & 238 \\
198 & 762 & 206 & 747 & 225 & 720 & 255 & 686 & 293 & 649 & 328 & 608 & 376 & 559 & 428 & 488 & 517 & 413 & 578 & 358 & 625 & 320 & 660 & 279 & 695 & 248 & 729 & 218 & 751 & 201 & 765 \\
797 & 166 & 786 & 180 & 769 & 207 & 733 & 245 & 698 & 282 & 652 & 330 & 602 & 385 & 539 & 469 & 434 & 558 & 371 & 621 & 315 & 670 & 270 & 706 & 233 & 746 & 199 & 775 & 173 & 790 & 163 \\
132 & 823 & 141 & 813 & 160 & 782 & 196 & 745 & 237 & 703 & 284 & 648 & 342 & 588 & 414 & 491 & 530 & 388 & 606 & 324 & 668 & 269 & 714 & 223 & 755 & 185 & 792 & 156 & 818 & 137 & 827 \\
856 & 109 & 848 & 120 & 826 & 150 & 793 & 191 & 749 & 239 & 696 & 294 & 635 & 359 & 553 & 464 & 425 & 583 & 341 & 653 & 275 & 712 & 221 & 764 & 174 & 802 & 140 & 831 & 116 & 852 & 107 \\
82 & 876 & 88 & 862 & 113 & 833 & 145 & 796 & 192 & 744 & 247 & 684 & 307 & 610 & 397 & 496 & 541 & 370 & 633 & 291 & 704 & 226 & 761 & 172 & 810 & 131 & 843 & 102 & 867 & 84 & 879 \\
898 & 61 & 893 & 78 & 869 & 103 & 836 & 147 & 791 & 197 & 732 & 258 & 667 & 339 & 572 & 461 & 417 & 598 & 310 & 685 & 242 & 752 & 177 & 808 & 130 & 851 & 94 & 880 & 73 & 895 & 58 \\
43 & 916 & 50 & 900 & 71 & 872 & 104 & 832 & 152 & 781 & 210 & 718 & 280 & 632 & 381 & 504 & 550 & 347 & 658 & 259 & 736 & 193 & 798 & 136 & 849 & 91 & 884 & 65 & 906 & 48 & 919 \\
933 & 30 & 924 & 46 & 903 & 72 & 868 & 114 & 824 & 161 & 768 & 229 & 688 & 317 & 586 & 456 & 402 & 623 & 290 & 715 & 209 & 783 & 148 & 839 & 100 & 883 & 59 & 912 & 37 & 928 & 28 \\
14 & 941 & 25 & 927 & 47 & 899 & 79 & 861 & 123 & 811 & 181 & 742 & 256 & 651 & 369 & 509 & 561 & 331 & 682 & 234 & 766 & 162 & 825 & 111 & 871 & 68 & 911 & 36 & 934 & 20 & 943 \\
953 & 10 & 944 & 26 & 923 & 51 & 892 & 89 & 846 & 144 & 784 & 208 & 713 & 299 & 597 & 448 & 394 & 638 & 267 & 734 & 190 & 804 & 126 & 859 & 80 & 901 & 42 & 930 & 18 & 950 & 8 \\
2 & 955 & 11 & 940 & 31 & 915 & 64 & 873 & 112 & 821 & 169 & 757 & 244 & 671 & 354 & 514 & 571 & 314 & 699 & 215 & 778 & 153 & 838 & 95 & 889 & 54 & 921 & 24 & 946 & 6 & 958 \\
960 & 3 & 952 & 17 & 932 & 44 & 896 & 83 & 853 & 134 & 794 & 200 & 725 & 289 & 609 & 444 & 384 & 647 & 257 & 743 & 179 & 814 & 117 & 866 & 70 & 909 & 34 & 939 & 12 & 956 & 0 \\
1 & 957 & 9 & 942 & 29 & 917 & 60 & 875 & 110 & 822 & 167 & 759 & 243 & 673 & 351 & 516 & 574 & 311 & 702 & 214 & 779 & 149 & 840 & 93 & 890 & 52 & 925 & 22 & 949 & 4 & 959 \\
954 & 7 & 947 & 21 & 926 & 49 & 894 & 87 & 850 & 139 & 788 & 205 & 717 & 297 & 599 & 446 & 389 & 642 & 264 & 737 & 186 & 809 & 124 & 863 & 76 & 904 & 39 & 936 & 15 & 951 & 5 \\
13 & 945 & 19 & 931 & 41 & 905 & 75 & 864 & 118 & 815 & 175 & 748 & 251 & 656 & 366 & 510 & 563 & 326 & 687 & 228 & 770 & 158 & 829 & 105 & 878 & 62 & 914 & 32 & 937 & 16 & 948 \\
938 & 27 & 929 & 38 & 910 & 67 & 874 & 106 & 830 & 157 & 774 & 222 & 694 & 309 & 591 & 455 & 400 & 626 & 285 & 722 & 203 & 789 & 142 & 845 & 90 & 887 & 56 & 918 & 33 & 935 & 23 \\
35 & 920 & 45 & 907 & 66 & 882 & 99 & 841 & 146 & 787 & 202 & 726 & 276 & 639 & 379 & 506 & 551 & 344 & 666 & 252 & 741 & 184 & 806 & 128 & 855 & 85 & 891 & 57 & 913 & 40 & 922 \\
908 & 55 & 897 & 69 & 881 & 97 & 847 & 135 & 799 & 189 & 739 & 250 & 672 & 334 & 575 & 459 & 415 & 605 & 304 & 691 & 232 & 760 & 170 & 817 & 121 & 857 & 86 & 886 & 63 & 902 & 53 \\
74 & 885 & 81 & 870 & 101 & 844 & 133 & 805 & 178 & 753 & 236 & 690 & 301 & 618 & 392 & 499 & 544 & 365 & 641 & 281 & 710 & 216 & 772 & 164 & 819 & 122 & 854 & 92 & 877 & 77 & 888 \\
865 & 98 & 858 & 115 & 834 & 138 & 801 & 176 & 756 & 227 & 705 & 287 & 644 & 355 & 557 & 463 & 423 & 587 & 335 & 661 & 265 & 724 & 212 & 773 & 165 & 816 & 129 & 842 & 108 & 860 & 96 \\
119 & 835 & 127 & 820 & 151 & 795 & 183 & 754 & 224 & 711 & 274 & 657 & 333 & 593 & 410 & 493 & 532 & 382 & 613 & 313 & 676 & 260 & 728 & 213 & 771 & 171 & 803 & 143 & 828 & 125 & 837 \\
812 & 155 & 800 & 168 & 780 & 195 & 750 & 231 & 707 & 272 & 665 & 323 & 611 & 380 & 545 & 468 & 432 & 566 & 360 & 628 & 302 & 679 & 261 & 723 & 219 & 758 & 187 & 785 & 159 & 807 & 154 \\
182 & 776 & 194 & 763 & 211 & 731 & 246 & 700 & 277 & 664 & 321 & 620 & 367 & 567 & 424 & 489 & 521 & 406 & 584 & 349 & 630 & 303 & 674 & 268 & 708 & 235 & 740 & 204 & 767 & 188 & 777 \\
738 & 220 & 730 & 240 & 709 & 263 & 681 & 295 & 650 & 325 & 619 & 364 & 577 & 408 & 527 & 474 & 447 & 542 & 395 & 589 & 352 & 627 & 316 & 659 & 286 & 689 & 254 & 719 & 230 & 735 & 217 \\
262 & 693 & 271 & 680 & 296 & 654 & 318 & 629 & 343 & 601 & 372 & 573 & 404 & 536 & 443 & 485 & 503 & 430 & 548 & 396 & 582 & 363 & 612 & 336 & 640 & 308 & 663 & 288 & 683 & 266 & 697 \\
645 & 322 & 631 & 338 & 617 & 353 & 596 & 373 & 579 & 391 & 554 & 419 & 533 & 438 & 502 & 477 & 466 & 513 & 431 & 540 & 409 & 562 & 386 & 585 & 368 & 600 & 346 & 624 & 332 & 636 & 319 \\
390 & 565 & 399 & 552 & 412 & 546 & 421 & 534 & 429 & 526 & 440 & 512 & 457 & 495 & 473 & 481 & 484 & 467 & 501 & 452 & 518 & 435 & 528 & 426 & 537 & 420 & 549 & 407 & 555 & 398 & 568
\end{tabular}%
\kern-40pt\mbox{}%
\end{tiny}
\caption{A $31 \times 31$ table whose discrepancy 
         with respect to $\overline{[2] \times [2]}$ is $27$.}
\label{figure: thirty-one}
\end{center}
\end{figure}

  Define 
  $f \colon [0, 1] ^2 \to \Rset$ by $
f (x, y) = 
   g (x) + g (y)
  $, where 
\begin{align}
   g (x) = 
    \begin{cases}
     1 - (4 x - 1) ^2 & \text{if} \ x \leq 1 / 2,  \\
     -1 + (4 x - 3) ^2 & \text{if} \ x \geq 1 / 2 
    \end{cases}
\end{align}
  (Figure~\ref{figure: function f}). 
\begin{figure}
\begin{center}
\newcommand{\putlabel}[3]{\smash{\makebox[0pt][r]{\raisebox{#2}{#3}\hspace*{#1}}}}%
\rule{0pt}{153pt}
\smash{\raisebox{15pt}{%
\includegraphics{./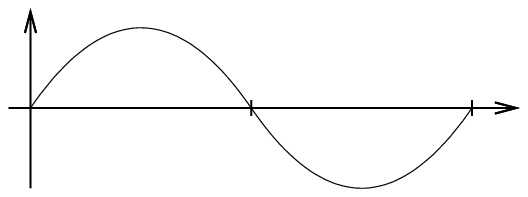}%
\putlabel{-11pt}{30pt}{$x$}%
\putlabel{135pt}{63pt}{$z$}%
\putlabel{6pt}{20pt}{$1$}%
\putlabel{140pt}{20pt}{$0$}%
\putlabel{67pt}{59pt}{$z = g (x)$}%
}}%
\smash{\raisebox{-46pt}{%
\includegraphics{./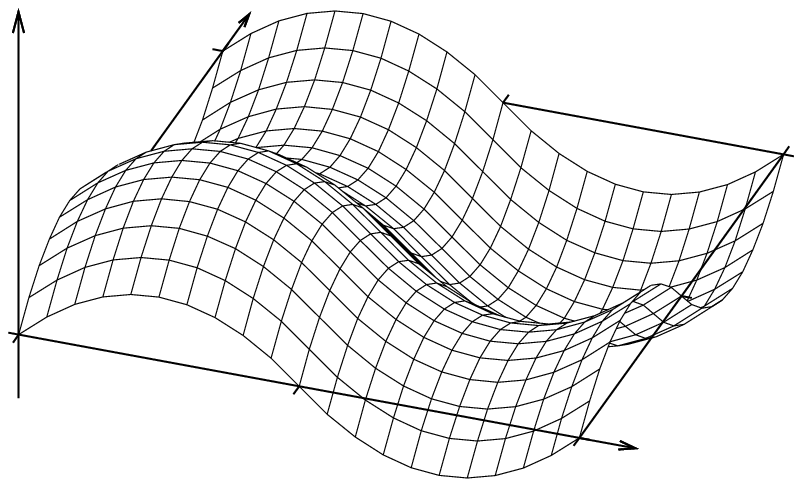}%
\putlabel{87pt}{55pt}{$x$}%
\putlabel{204pt}{188pt}{$y$}%
\putlabel{270pt}{188pt}{$z$}%
\putlabel{280pt}{86pt}{$0$}%
\putlabel{112pt}{48pt}{$1$}%
\putlabel{220pt}{172pt}{$1$}%
\putlabel{105pt}{183pt}{$z = f (x, y)$}%
\hspace{-47pt}\mbox{}%
}}%
 \caption{The functions $g$ and $f$.}
 \label{figure: function f}
\end{center}
\end{figure}
  Let $\alpha$, $\beta \in [0, 1]$ and 
  define $s \colon [n] ^2 \to [0, 1] ^2$ by
\begin{equation}
 \label{eq: grid}
 s (i, j) = \left( \frac{i + \alpha}{n}, \frac{j + \beta}{n} \right). 
\end{equation}
  Let $H$ be the $n \times n$ table whose 
  $(i, j)$th entry is the rank of $f (s (i, j))$
  (with some tie-breaking rule): 
\begin{align}
 \label{eq: ranking}
 H (i, j) = 
  \bigl\lvert 
   \bigl\{\, 
    (i', j') \in [n] ^2 
   :
   \;
& 
     f \bigl( s (i', j') \bigr) < f \bigl( s (i, j) \bigr) 
     \ \text{or} \ 
\\
\notag
&
     \bigl( 
       f \bigl( s (i', j') \bigr) = f \bigl( s (i, j) \bigr) 
      \ \text{and} \ 
       n i' + j' < n i + j
     \bigr)
   \,\bigr\} 
  \bigr\rvert.
\end{align}
  Finally, define the desired matrix~$D$ by 
\begin{equation}
 \label{eq: transformation}
  D \bigl( (i + j) \bmod n, (i - j) \bmod n \bigr) = H (i, j). 
\end{equation}
  Figure~\ref{figure: thirty-one} was obtained by this method 
  with $n = 31$ and $(\alpha, \beta) = (0.286, 0)$. 

  To see intuitively why $D$ has small discrepancy, 
  note that a $2 \times 2$ region in $D$
  corresponds to the region in $H$ (or its translate)
  shown in Figure~\ref{figure: distant domino pair}. 
  These four cells are mapped by $s$ to 
  two nearby points $(x \pm \varepsilon, y)$
  and another two points $(x + 1 / 2, y + 1 / 2 \pm \varepsilon)$ 
  (the coordinates are modulo $1$). 
  Since $f (x, y) = - f (x + 1 / 2, y + 1 / 2)$, 
  the sum of the values of $f$ at these four points is almost zero. 
  Thus, 
  assuming that taking the ranks does not distort the distribution of values too much, 
  we can expect that $D$ has low discrepancy. 
  We add the displacement $(\alpha, \beta)$ in \eqref{eq: grid}
  in order to reduce the chance of ties in the ranking 
  which seem to work adversely. 

\begin{figure}
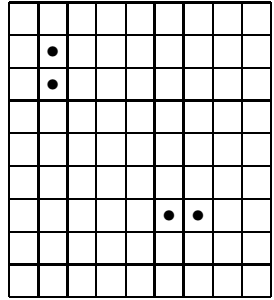

\begin{center}\footnotesize
\newcommand{\colstrut}{\rule{11pt}{0pt}}
\newcommand{\markcell}{$\bullet$}
\tabcolsep0pt
\begin{tabular}{|c|c|c|c|c|c|c|c|c|}
\hline
\colstrut&\colstrut&\colstrut&\colstrut&\colstrut&\colstrut&\colstrut&\colstrut&\colstrut\\ \hline
&\markcell&&&&&&& \\ \hline
&\markcell&&&&&&& \\ \hline
&&&&&&&& \\ \hline
&&&&&&&& \\ \hline
&&&&&&&& \\ \hline
&&&&&\markcell&\markcell&& \\ \hline
&&&&&&&& \\ \hline
&&&&&&&& \\ \hline
\end{tabular}
 \caption{A region in $H$ corresponding to a $2 \times 2$ square in $D$ 
          (for $n = 9$).}
 \label{figure: distant domino pair}
\end{center}
\end{figure}

  As Aronov et al.~%
\cite{aronov08:_gener_of_magic_squar_with}
  point out, 
  our problem is analogous to
  a common situation in discrete geometry 
  where we try to arrange discrete objects so that 
  they look close to some ``balanced'' continuous distribution. 
  The constraint peculiar to our problem is that 
  we have to use each number in $[m n]$ exactly once. 
  The ranking technique used here 
  may be applicable to other problems with this constraint. 
  However, analyzing its performance seems to be hard: 
  although our computer experiment for several $n$'s suggests that 
  the above method achieves sublinear $2 \times 2$ discrepancy, 
  we have no proof yet. 

\subsection*{Acknowledgments}

The author thanks
Tomoko Adachi, Tetsuo Asano, 
Tsukasa Kuribayashi, Yasuko Matsui, Shao-Chin Sung, 
Hideki Tsuiki, Ryuhei Uehara and the referees 
for helpful comments and discussions. 
This work was supported in part by 
the Nakajima Foundation and 
the Natural Sciences and Engineering Research Council of Canada. 
An earlier version was presented at the 
Eighth Japan-Korea Joint Workshop on Algorithms and Computation (WAAC 2005). 

\bibliographystyle{plain}

\end{document}